\newcommand{\bea}{\begin{eqnarray}}
\newcommand{\eea}{\end{eqnarray}}
\def\C{\hbox{$\mit I$\kern-.7em$\mit C$}}
\def\R{\hbox{$\mit I$\kern-.6em$\mit R$}}
\def\N{\hbox{$\mit I$\kern-.6em$\mit N$}}
\def\ket#1{|#1\rangle}
\newcommand{\one}{\mbox{$1 \hspace{-1.0mm}  {\bf l}$}}
\def\tr{\mathrm{tr}}
\def\bra#1{\left< #1\right|}
\newcommand{\proj}[1]{\ket{#1}\bra{#1}}
\newcommand{\kb}[2]{\ket{#1}\bra{#2}}
\newtheorem{theorem}{Theorem}
\newtheorem{lemma}[theorem]{Lemma}
\newtheorem{observation}[theorem]{Observation}
\begin{document}

\title{Transformations in quantum networks via local operations assisted by finitely many rounds of classical communication}

\author{Cornelia Spee}
\affiliation{Institute for Theoretical Physics, University of Innsbruck, Technikerstra{\ss}e 21A, 6020 Innsbruck, Austria}
\affiliation{Institute for Quantum Optics and Quantum Information (IQOQI),
Austrian Academy of Sciences, Boltzmanngasse 3, 1090 Vienna, Austria}
\author{Tristan Kraft}
\affiliation{Institute for Theoretical Physics, University of Innsbruck, Technikerstra{\ss}e 21A, 6020 Innsbruck, Austria}
\affiliation{Naturwissenschaftlich-Technische Fakult\"at, 
Universit\"at Siegen, Walter-Flex-Stra{\ss}e 3, 57068 Siegen, Germany}         

\begin{abstract}
Recent advances have led towards first prototypes of quantum networks in which entanglement is distributed by sources producing bipartite entangled states. This raises the question of which states can be generated in quantum networks based on bipartite sources using local operations and classical communication. In this work, we study state transformations under finite rounds of local operations and classical communication (LOCC) in networks based on maximally entangled two-qubit states. We first derive the symmetries for arbitrary network structures, as these determine which transformations are possible. Then, we show that contrary to tree graphs, for which it has already been shown that any state within the same entanglement class can be reached, there exist states which can be reached probabilistically but not deterministically if the network contains a cycle. Furthermore, we provide a systematic way to determine states which are not reachable in networks consisting of a cycle. Moreover, we provide a complete characterization of the states which can be reached in a cycle network with a protocol where each party measures only once, and each step of the protocol results in a deterministic transformation. Finally, we present an example which cannot be reached with such a simple protocol, and constitutes, up to our knowledge, the first example of a LOCC transformation among fully entangled states requiring three rounds of classical communication.
\end{abstract}

\maketitle

\section{Introduction}
In the last decades, a lot of effort has been devoted to realizing the first prototypes of a quantum network~\cite{Kimble2008,Wehner2018}, based on various physical platforms~\cite{Cirac1997,Duan2010,Reiserer2015}. Quantum networks offer the possibility for quantum information processing tasks such as long-distance quantum communication~\cite{Duan2001}, or distributed quantum computing~\cite{Cirac1999,Spiller2006}. Moreover, they are also interesting from a theoretical point of view since they require novel tools to study the correlations that can arise from such networks, which is in itself a challenging problem, and to evaluate their performance~\cite{Elkouss2020} (and Refs. therein). Nevertheless, a lot of progress has recently been made concerning the characterization of correlations~\cite{Gisin2020} (and Refs. therein) and entanglement~\cite{Kraft2020triangle,Navascues2020,Luo2020,Aberg2020,Kraft2020,Hansenne2022}.

In quantum networks, entanglement is distributed by multiple sources, rather than a single source. Subsequently, the entanglement created in the network can be manipulated, e.g., by means of entanglement swapping~\cite{teleport}, and can thus be spread over the entire network~\cite{Acin2007}. Entanglement swapping is an instance of a much larger and important class of local operations assisted by classical communication (LOCC). The general question consists of identifying which state transformations are possible via LOCC and which are the most useful states under such a restriction. In the bipartite case, this question is answered by the Nielsen majorization criterion~\cite{Nielsen}. In case the transformation is not required to be deterministically, one arrives at the so-called stochastic LOCC (SLOCC) operations. For instance, in the case of three qubits, it is known that there exist only six classes of states that are equivalent under SLOCC~\cite{Duer}. For four parties, the number of SLOCC classes is already infinite~\cite{Verstraete}.

Despite the fact that the class of LOCC operations is of fundamental importance for many quantum information tasks it is notoriously difficult to characterize (see, e.g.,~\cite{Donald2002,Chitambar2011,Chitambar2012,WinterLOCC,Cohen2017}).
Despite many difficulties, significant progress has been made, e.g., in the case where only finitely many rounds of classical communication (finite-round LOCC) are considered~\cite{Turgut1,Turgut2,finiteroundLU,finiteroundLU2,Vicente2013,Schwaiger2015,Spee2016,Hebenstreit2016}.

\begin{figure}[t]
\centering
\includegraphics[width=\columnwidth]{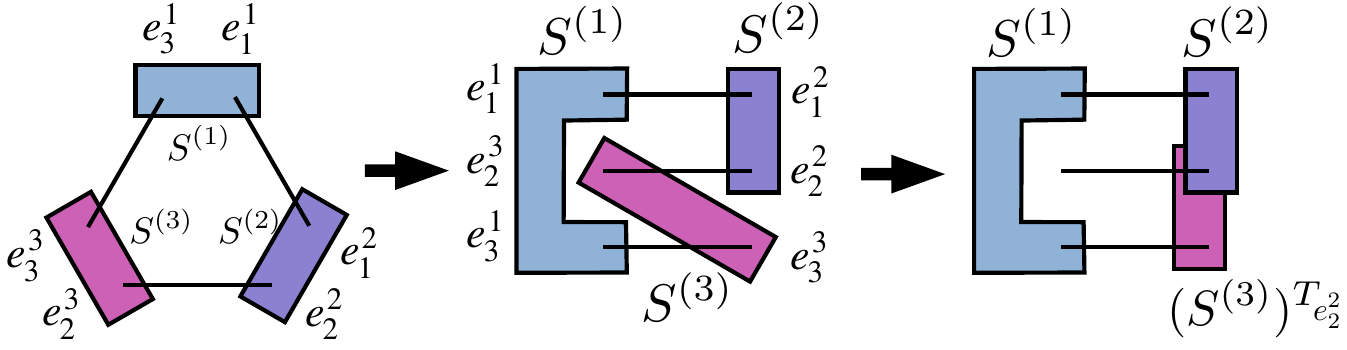} 
\caption{Schematic illustration of the procedure that allows to identify the symmetries of the triangle network. the first and the second picture are identical up to rearrangement of the parties and nodes. In order to transform the second picture into the third one we used that $Y\otimes \one\ket{\Phi^+}=\one\otimes Y^T\ket{\Phi^+}$. This, together with the fact that all symmetries of the $\ket{\Phi^+}$ state are of the form $Y\otimes Y^{-T}$, proves that Eq.~\eqref{eq_symmetry} holds.
\label{fig:symm}}
\end{figure}

In this work, we study the transformation of entanglement in quantum networks under LOCC. We emphasize that in case the parties have access to many copies of maximally entangled states they can use teleportation in order to realize an arbitrary entangled state. This might, however, involve a potentially large number of rounds of communication. The fact that quantum memories cannot provide long storage times motivates us to consider the more realistic scenario where sources distribute single copies of bipartite maximally entangled states according to some predefined network structure and then perform an LOCC transformation involving only a finite number of rounds.

In Ref.~\cite{Hayata}, it has already been shown that if the graph is a tree graph, then any state within this SLOCC class can be realized. As we will show, for any network that contains a cycle, i.e., that is not a tree graph, this no longer holds true. We will determine pure states which can/cannot be reached within such a scenario. First, we will show that surprisingly, in such a network, all local symmetries are of a very specific form. This allows us to provide a systematic approach to rule out a large class of states which can never be realized within networks consisting of a cycle. Moreover, we identify all states which can be obtained in such networks under the restriction that each party performs at most one non-trivial measurement. Finally, we show that there exist states which cannot be reached with a protocol where each party measures only once, and which require three rounds of communication.

\section{Preliminaries}
Any quantum network can be represented by a graph $G$ consisting of nodes $1,\dots,n$ which are connected by edges $e\in E$. Each edge $e$ corresponds to a two-qubit state $\ket{\psi}_e$ that is shared between the two nodes it connects, i.e., $e\equiv \{e_j^{i},e_j^{k}\}\in E$, where $j$ labels the source, and $i,k$ label the nodes that are connected by the source $j$. We will use the convention that lower indices refer to parties. Whenever additional indices are required, such as summation or source indices, the party index will be superscripted. Moreover, we will, whenever it is convenient, denote the identity matrix and the Pauli matrices $\sigma_x,\sigma_y,\sigma_z$ by $\sigma_i$, with $i=0,1,2,3$. We will assume that, initially, each edge corresponds to a maximally entangled state $\ket{\Phi^+}_e=\frac{1}{\sqrt{2}}(\ket{00}+\ket{11})$. Thus, the initial state of the network can be written as
\begin{equation}\label{eq:initial}
\ket{\psi_{\rm{ini}}}=\bigotimes_{e\in E} \ket{\Phi^+}_e.
\end{equation}
The number of qubits that each node holds is determined by the number of connected edges, that is, its degree in the graph $G$.

In the following, we will restrict to transformations among pure states [up to local unitary (LU) transformations] that preserve the ranks of the local reduced density matrices.
In this case, it is not possible to reach any state outside of the SLOCC class~\cite{Duer}
\begin{equation}\label{eq:final}
h_1\otimes \ldots \otimes h_n \prod_{e\in E} \ket{\Phi^+}_e
\end{equation}
via a finite-round LOCC protocol, where $h_i$ is an invertible operator acting on all the particles node $i$ has received from the sources. As LOCC and LOCC$_\mathbb{N}$ transformations have a rather complicated structure, it is sometimes useful to consider larger classes of transformations for which the possible transformations are easier to characterize. For instance, LOCC operations are strictly contained in the set of \emph{separable} operations (SEP). A completely positive trance preserving (cptp) map $\Lambda$ is called separable if all its Kraus operators are separable, i.e., $\Lambda(\varrho )=\sum_i K_i\varrho K_i^\dagger$, with $K_i=K_i^1\otimes\dots\otimes K_i^n$. It is well-known that $LOCC_\mathbb{N}\subsetneq SEP$~\cite{WinterLOCC} (and Refs. therein), and thus, if a transformation is not possible via SEP, it is also not possible via LOCC.

For LOCC$_\mathbb{N}$ transformations among fully entangled pure states, which we also consider here, it is known that they are contained in the set of separable operations for which all Kraus operators are invertible, so-called SEP$_1$ transformations~\cite{measuroutcdonotexist}. We will use the following SEP$_1$ condition~\cite{Gour,Gour2} for the described scenario, which is a necessary condition on finite-round LOCC transformations to exist:
\begin{equation}\label{SEP1}
\sum_i p_i S_i^\dagger H S_i= r \one,
\end{equation}
where $\{p_i\}$ is a probability distribution, and $r\in\mathbb{R}$. Moreover, in this context and in the following,
\begin{equation}
H=\otimes _j H_j=\otimes_j h_j^\dagger h_j
\end{equation}
is determined by the final state in Eq.~\eqref{eq:final}, and $S_i$ are  local symmetries of the initial state in Eq.~\eqref{eq:initial}. The local symmetries are all $S=\otimes S^{(j)}$, with $S^{(j)}\in GL(d_j)$ and $d_j$ being the local dimension of party $j$, for which
\begin{equation}
S\ket{\psi_{\rm ini}}=\ket{\psi_{\rm ini}}.
\end{equation}
The set of all local symmetries we will denote by $\mathbf{S} (\ket{\psi_{\rm ini}})$. Intuitively, the condition in Eq.~\eqref{SEP1} guarantees that there exist Kraus operators that define the correct cptp map in SEP$_1$ that can achieve the transformation from the initial state $\ket{\psi_{\rm ini}}$ to the final state $h\ket{\psi_{\rm ini}}$~\cite{Gour,Gour2}. Hence, it is the symmetries of the initial state which allow the transformation to be deterministic.

In~\cite{MPS}, the symmetries and SLOCC transformations of translationally invariant matrix product states (which include certain network structures) have been characterized. In the following, we characterize the symmetries of arbitrary networks of bipartite sources that distribute maximally entangled two-qubit states.

\section{Symmetries of Quantum Networks}
For simplicity, we will only provide the explicit proof for the symmetries of the triangle network, as shown in Fig.~\ref{fig:symm}. However, the proof can be straightforwardly generalized to arbitrary network structures.

In order to derive the symmetries, we will use the property of $\ket{\Phi^+}$ that $Y\otimes \one\ket{\Phi^+}=\one\otimes Y^T\ket{\Phi^+}$. With this we have that
\begin{align}\nonumber
&S^{(1)}\otimes S^{(2)}\otimes S^{(3)} \ket{\Phi^+}_{e^1_1 e_1^2}\ket{\Phi^+}_{e_2^2e_2^3}\ket{\Phi^+}_{e_3^3e_3^1}=\\
&S^{(1)}\otimes \one_{e_2^3}\otimes\bigg[(S^{(2)}\otimes \one_{e_3^3}) \{(S^{{(3)}})^{T_{e_2^2}} \otimes \one_{e^2_1}\}\bigg]\ket{\Phi^+}^{\otimes 3},
\end{align}
where $T_{e_2^2}$ denotes the partial transpose on $e_2^2$ (see also Fig.~\ref{fig:symm} for a graphical illustration). Using that the well known symmetries of $\ket{\Phi^+}$ are $Y\otimes Y^{-T}$ this implies that
\begin{align}\label{eq_symmetry}
(S^{(2)}\otimes \one_{e^3_3}) [(S^{{(3)}})^{T_{e_2^2}} \otimes \one_{e^2_1}]=(S^{(1)}_{e_1^2e_3^3})^{-T}\otimes\one_{e^2_2}.
\end{align}
Remarkably, this condition can only be satisfied if $S^{(2)}$ and $(S^{{(3)}})^{T_{e_2^2}}$, and thus, also $S^{(3)}$, factorize. More precisely, one can show that 
\begin{equation}\label{eq_fact_mt}
X_{AB}Z_{BC}=Y_{AC}\otimes \one_B,
\end{equation}
with system $B$ being two-dimensional and $X, Z$ invertible operators, implies that $X$ and $Z$ factorize with respect to the splitting $A|B$ and $B|C$ respectively (see Lemma~\ref{Lemma_factorize} in Appendix~\ref{Appendix_A}). Note that, therefore, $S_2$ and $S_3$ factorize for any symmetry of the triangle. The triangle is invariant under cyclic permutation of the parties, and therefore, via an analogous argumentation, one can show that also $S_1$ has to factorize for any symmetry. Moreover, we emphasize that the same argument can straightforwardly be extended to any network structure.

To see this, observe that by construction, any qubit is maximally entangled to exactly one other qubit, which is held by a different party. Moreover, all qubits that belong to one party are all entangled to different parties. Thus, if one considers any party-local symmetry on one party, one can repeatedly use that $Y\otimes \one\ket{\Phi^+}=\one\otimes Y^T\ket{\Phi^+}$ to make this symmetry overlap with another symmetry on exactly one qubit. Then one employs Eq.~\eqref{eq_fact_mt} to see that the operators factorize (see Fig.~\ref{LOCCNetworkTrick} for a simple example). Repeating this procedure for all qubits on any party, one sees that all symmetries have to be local symmetries. Finally, if the graph $G$ that describes the connectivity of the network contains leaves, i.e., nodes that only have a single neighbour, one arrives in a situation where Eq.~\eqref{eq_fact_mt} becomes trivial, i.e., $X_{B}Z_{BC}=Y_{C}\otimes \one_B$. In this case, one can directly observe that $Z_{BC}$ factorizes by multiplying with $X_{B}^{-1}$ from the left.

With this, we have that there can only exist symmetries acting locally on $e^i_j$ and $e^k_j$ for any $e_j\equiv \{e^i_j,e^k_j\}\in E$.
Using that the local symmetries of $\ket{\Phi^+}$ are $X\otimes X^{-T}$ we obtain that the set of symmetries is given by $\bigotimes_{e\in E} (X_e\otimes X_e^{-T})_e$. Note that here $X_e$ can be different for any edge $e$. This results in the following observation.
\begin{observation}
For any network structure the local symmetries of $ \prod_{e\in E} \ket{\Phi^+}_e$ are given by $\bigotimes_{e\in E} (X_e\otimes X_e^{-T})_e$.
\end{observation}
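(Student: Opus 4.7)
The plan is to iterate the triangle argument of the excerpt: repeatedly apply the transpose identity $Y\otimes\one\ket{\Phi^+}=\one\otimes Y^T\ket{\Phi^+}$ to convert the stabilizer condition $S\bigotimes_{e\in E}\ket{\Phi^+}_e=\bigotimes_{e\in E}\ket{\Phi^+}_e$ into constraints of the form $X_{AB}Z_{BC}=Y_{AC}\otimes\one_B$ on qubit subsystems, so that Eq.~\eqref{eq_fact_mt} forces the local operator at each vertex to factorize across its incident edges. Writing $S=\bigotimes_{v\in V}S^{(v)}$, the goal is to show every $S^{(v)}$ splits as $\bigotimes_{e\ni v}S^{(v,e)}$; once that holds, the symmetry restricts edge-wise to a stabilizer of a single $\ket{\Phi^+}_e$, which must be of the form $X_e\otimes X_e^{-T}$, giving the claimed structure.

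The core step is the per-vertex factorization. Fix a vertex $v$ and two incident edges $e,e'$; the aim is to show $S^{(v)}$ decomposes across the $e|e'$ split. If $e$ and $e'$ lie on a common cycle through $v$, traversing that cycle with the transpose identity migrates all the other $S^{(u)}$ operators around the loop and produces a relation of precisely the form $X_{AB}Z_{BC}=Y_{AC}\otimes\one_B$ with $B$ a qubit at $v$ on the $e$-side, yielding the desired factorization by Eq.~\eqref{eq_fact_mt}. For tree-like branches attached at $v$, I would peel off leaves one at a time: each leaf edge contributes a transpose-trick condition on $v$'s corresponding qubit, and an analogous application of the factorization lemma splits that qubit off from the rest of $S^{(v)}$. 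Iterating over all vertex-edge pairs gives the full tensor decomposition of every $S^{(v)}$.

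The main obstacle is the bookkeeping needed to carry this out uniformly on an arbitrary graph, particularly when a high-degree vertex has incident edges participating in several overlapping cycles. A clean organisation is to fix a spanning tree of $G$ and process the chord edges in sequence: each chord closes a unique cycle to which the cyclic version of the argument applies, while the spanning tree is handled by leaf-peeling. Since both ingredients reduce to the single elementary manoeuvre of the triangle proof, the induction terminates and yields $S=\bigotimes_{e\in E}(X_e\otimes X_e^{-T})_e$ for any network structure.
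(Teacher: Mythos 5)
Your proposal follows the paper's own route essentially verbatim: the transpose identity $Y\otimes\one\ket{\Phi^+}=\one\otimes Y^{T}\ket{\Phi^+}$ reduces the stabilizer condition to relations of the form of Eq.~\eqref{eq_fact_mt}, the factorization lemma (Lemma~4 in Appendix~A) then forces each vertex operator to split across its incident edges, and the known symmetries $X\otimes X^{-T}$ of a single $\ket{\Phi^+}$ finish the argument. The paper only works the triangle out explicitly and asserts the generalization, so your spanning-tree/chord bookkeeping is a legitimate way to organize it; the only point to add is that a cycle of length $N>3$ yields a chain of more than two overlapping factors, so the lemma must be applied iteratively by grouping all but one factor into the ``$A$'' system, which is permitted since only the overlap system $B$ need be two-dimensional.
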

Hence, only symmetries acting locally on single qubits can contribute and enable transformations.

\begin{figure}[t]
\includegraphics[width=\columnwidth]{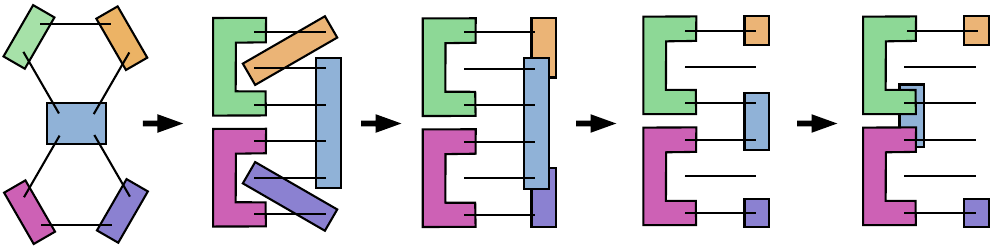} 
\caption{Schematic illustration of the procedure that allows to identify the symmetries for the example of the double-triangle network. In order to transform one picture into the other $Y\otimes \one\ket{\Phi^+}=\one\otimes Y^T\ket{\Phi^+}$ is used, as well as that Eq.~(\ref{eq_fact_mt}) implies that the operators factorize. From the last picture it follows that all symmetries factorize.
\label{LOCCNetworkTrick}}
\end{figure}

\section{States which cannot be prepared via finite round LOCC transformations in networks}
Within a tree graph network, any state in the corresponding SLOCC class is reachable~\cite{Hayata}. Here, we will show that if a network contains a cycle, this is no longer true. For convenience, we will consider here and in the following as initial state $ \bigotimes_{e\in E} \ket{\Psi^-}_e$, which is local unitary (LU) equivalent to $ \bigotimes_{e\in E} \ket{\Phi^+}_e$, and therefore has the symmetries $\bigotimes_{e\in E} (X_e\otimes X_e)_e$ with $\det (X_e)=1$. We make the following observation.
\begin{observation}
An example for a class of states which  cannot be reached via finite round LOCC from the initial network state containing a cycle is given by $(\otimes_{i\in cycle} h_i\otimes_{j\not\in cycle} \one) \bigotimes_{e\in E} \ket{\Psi^-}_e$ with $H_i=(\one/4+\alpha_i \sigma_z\otimes\sigma_z)\otimes_{\not\in cycle} \one $ acting non-trivially only on the two qubits which are part of the cycle and $\prod_{i\in cycle}\alpha_i> 0$.
\end{observation}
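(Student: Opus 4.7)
The plan is to assume the necessary condition~(\ref{SEP1}) holds for the displayed $H$ and derive a contradiction from $\prod_{i\in\mathrm{cycle}}\alpha_i>0$. By Observation~1, every symmetry factorises as $S=\bigotimes_{e\in E}(X_e\otimes X_e)_e$ with $X_e\in SL(2,\mathbb{C})$. First I would partial-trace~(\ref{SEP1}) over the non-cycle qubits: since $H$ acts trivially there and the symmetries factor across edges, this only dresses the probabilities by positive scalars $\tr(X_e^\dagger X_e)^2$ for each non-cycle edge and leaves an equation of the form $\sum_\ell q_\ell \bigotimes_{i\in\mathrm{cycle}} G_i^{(\ell)} = R\,\one_\mathrm{cycle}$, with $q_\ell\ge 0$, $R>0$, and $G_i^{(\ell)}=(X_{e_{i-1}}^{(\ell)\dagger}\otimes X_{e_i}^{(\ell)\dagger})H_i(X_{e_{i-1}}^{(\ell)}\otimes X_{e_i}^{(\ell)})$.

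Next I would extract a scalar constraint by tracing both sides against the traceless observable $O=\bigotimes_{i\in\mathrm{cycle}}(\sigma_z\otimes\sigma_z)_i$. Using that $H_i$ contains only the $\one$ and $\sigma_z\otimes\sigma_z$ Pauli components, a short computation factors $\tr[O\bigotimes_i G_i^{(\ell)}]$ node by node into $\prod_{i\in\mathrm{cycle}}[\tfrac14 u_{e_{i-1}}^{(\ell)}u_{e_i}^{(\ell)}+\alpha_i v_{e_{i-1}}^{(\ell)}v_{e_i}^{(\ell)}]$, where the edge parameters $u_e^{(\ell)}=\tr(\sigma_z X_e^{(\ell)\dagger} X_e^{(\ell)})$ and $v_e^{(\ell)}=\tr(\sigma_z X_e^{(\ell)\dagger}\sigma_z X_e^{(\ell)})$ are constrained by $v_e^2\le u_e^2+4$ (equivalent to $\det X_e=1$). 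The cycle topology forces each $u_e^{(\ell)}$ and each $v_e^{(\ell)}$ to appear an even number of times in this product, so that on the ``unitary slice'' $u_e\equiv 0$ the invariant collapses to $\prod_i\alpha_i\cdot\prod_e v_e^2$, which has strictly the sign of $\prod_i\alpha_i>0$, while the right-hand side of the scalar equation vanishes.

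The hard part is extending this sign argument to arbitrary $SL(2,\mathbb{C})$ symmetries. Setting $D_i=\mathrm{diag}(1,4\alpha_i)$ and $\vec w_e=(u_e,v_e)^T$, the invariant can be recast as a cyclic trace $\tr[\prod_i D_i\vec w_{e_i}\vec w_{e_i}^T]$ of rank-one $2\times 2$ matrices around the cycle, and by allowing $u_e\ne 0$ one can arrange for individual cyclic products to be negative; one must therefore rule out cancellations in the convex combination. I would attack this either by combining the above equation with the analogous scalar constraints obtained from observables $O_S=\bigotimes_{i\in S}(\sigma_z\otimes\sigma_z)_i$ on proper subsets $S$ of cycle nodes, or by a determinant-type inequality that exploits the absence of a leaf edge in the cycle---the very feature distinguishing cycles from the tree case of Ref.~\cite{Hayata}, where the free boundary qubits absorb the local gauge degrees of freedom. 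This closed-loop step is where I expect the bulk of the technical work to lie.
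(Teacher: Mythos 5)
Your setup is right (the necessary condition \eqref{SEP1} together with the factorized symmetries $\bigotimes_e (X_e\otimes X_e)$, $\det X_e=1$), and the general strategy of extracting a scalar sign obstruction is in the spirit of the paper's argument. However, there is a genuine gap exactly where you flag it: after tracing against $\bigotimes_{i\in\mathrm{cycle}}(\sigma_z\otimes\sigma_z)_i$ you are left with a convex combination of cyclic products $\prod_i[\tfrac14 u_{e_{i-1}}u_{e_i}+\alpha_i v_{e_{i-1}}v_{e_i}]$ whose individual terms have no definite sign for general $SL(2,\mathbb{C})$ symmetries, and you give no mechanism to exclude cancellations. Even your ``unitary slice'' fallback is not airtight: $v_e=\tr(\sigma_z X_e^\dagger\sigma_z X_e)$ vanishes for, e.g., $X_e=e^{i\pi\sigma_x/4}$, so the terms are merely nonnegative there, and in any case the symmetry group is not unitary. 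As written, the proof does not close.

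The paper avoids this multilinear positivity problem altogether by a different reduction: it projects Eq.~\eqref{SEP1} onto $\ket{\Psi^-}_e$ for \emph{every} edge except one edge of the cycle. Since $X_e\otimes X_e\ket{\Psi^-}=\det(X_e)\ket{\Psi^-}=\ket{\Psi^-}$, this projection makes all edge matrices but one drop out of the equation entirely (rather than merely dressing coefficients, as your partial trace does), and telescopes the cycle operators into a single two-qubit equation $\sum_i p_i(X_i\otimes X_i)^\dagger J(X_i\otimes X_i)=r\one$ with $J=\one/4^N+\prod_j\alpha_j\,\sigma_z\otimes\sigma_z$. A further projection onto $\ket{\Psi^-}$ fixes $r=1/4^N-\prod_j\alpha_j$, and then the trace of the two-qubit equation becomes a sum of terms of the form $\tfrac{1}{4^N}\bigl([\tr(X_i^\dagger X_i)]^2-4\bigr)+\bigl([\tr(\sigma_z X_i^\dagger X_i)]^2+4\bigr)\prod_j\alpha_j$, each manifestly positive when $\prod_j\alpha_j>0$ because $[\tr(X^\dagger X)]^2\ge 4$ for $\det X=1$. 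The missing idea in your attempt is precisely this singlet projection on the internal edges, which eliminates the loop of coupled $SL(2,\mathbb{C})$ matrices before any positivity argument is needed; without it, the closed-loop cancellation problem you identify appears genuinely hard.
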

\begin{proof} 
Consider the projection of Eq.~\eqref{SEP1}, with $H_i$ being of this form, on $\ket{\Psi^-}_e$ for all edges except one edge that is part of the cycle, i.e., on $\bigotimes_{E\backslash \tilde{e}\in cycle} \ket{\Psi^-}_e$. One obtains that in order for the transformation to be possible it has to hold that
 \begin{equation}\label{SEP1_proj}
\sum_i p_i (X_i \otimes X_i)^\dagger J (X_i \otimes X_i)= r \one,
\end{equation}
with $J=\one/4^N+\prod_{j\in cycle}\alpha_j\sigma_z\otimes\sigma_z$ and $N$ the number of parties within the cycle. From the projection of this equation on $\ket{\Psi^-}$ one obtains that $r=1/4^N-\prod_{j\in cycle}\alpha_j$.

Considering then the trace leads to an equation of the following form
 \begin{multline}
\sum_i p_i\frac{1}{4^N}\Big\{[|a_i|^2+|c_i|^2+|b_i|^2+|d_i|^2]^2-4\Big\}\\
+\Big\{[|a_i|^2+|c_i|^2-(|b_i|^2+|d_i|^2)]^2\\
+4\Big\}\prod_{j\in cycle}\alpha_j=0,
\end{multline}
where $a_id_i-b_ic_i=1$ for any $i$. As $[|a_i|^2+|c_i|^2+|b_i|^2+|d_i|^2]^2\geq 4$ for $a_id_i-b_ic_i=1$ and $[|a_i|^2+|c_i|^2-(|b_i|^2+|d_i|^2)]^2+4>0$ this equation can not be fulfilled if $\prod_{j\in cycle}\alpha_j>0$.
\end{proof}
We note that, as the proof relies on Eq.~\eqref{SEP1}, the class of states defined above is also not reachable by SEP$_1$ transformations. Before characterizing all states which can be reached with simple LOCC protocols in cycle networks, we present a systematic way to decide whether or not a state is reachable for many cases.

\section{Systematic way to determine finite round LOCC transformations which are not possible in a cycle network}\label{sec:systematic}
In the following, we will restrict to networks which correspond to a cycle of size $N$ (which we will call cycle networks).
Moreover, we will use the following necessary and sufficient condition for a transformation to be possible if party $k$ performs a generalized measurement, communicates the outcome to the other parties, and these apply LUs. Note that this corresponds to a single round within a LOCC protocol. Such a transformation is solely possible among pure states if the following conditions are satisfied. Firstly, the initial state is of the form
\begin{equation}
h_1\otimes \ldots \otimes g_k\otimes h_{k+1}\ldots h_N\ket{\Psi}
\end{equation}
and the final state is (up to LUs)
\begin{equation}
h_1\otimes \ldots \otimes h_k\otimes h_{k+1}\ldots h_N\ket{\Psi}.
\end{equation}
Secondly, there must exist local symmetries $S_i=\bigotimes_j S_i^{(j)}\in \mathbf{S}(\ket{\Psi})$ and probabilities $\{p_i\}$ such that
\begin{equation}\label{commute}
(S_i^{(j)})^\dagger H_jS_i^{(j)}\propto H_j
\end{equation}
for parties $j=1, \ldots,k-1, k+1,\ldots N$, and for party $k$
\begin{equation}\label{finiteLOCCk}
\sum_i p_i (S_i^{(k)})^\dagger H_k S_i^{(k)}=r G_k= r g_k^\dagger g_k,
\end{equation}
for some $r\in\mathbb{R}$~\cite{finiteroundLU,finiteroundLU2,finiteroundgeneral}. The transformation is non-trivial, i.e., the initial and the final state are not LU equivalent, if
\begin{equation}\label{eq:non-trivial}
(S_i^{(k)})^\dagger G_k S_i^{(k)}\not\propto H_k
\end{equation}
for all symmetries for which Eq.~\eqref{commute} holds true.

Note that for any finite round LOCC protocol transforming a pure state into some other pure state, the final round is exactly of the form described above, i.e., it has to transform deterministically one pure state into another within a single round. Hence, for any state which is reachable via a finite-round LOCC protocol, there have to exist symmetries for which Eq. (\ref{commute}) holds true for $N-1$ parties and for at least one symmetry $S_i$ for which Eq. (\ref{commute}) holds
\bea\label{notcommute}(S_i^{(k)})^\dagger H_kS_i^{(k)}\not\propto H_k\eea for the remaining party (in order for an operator $G_k\not\propto H_k$ to possibly exist)~\cite{finiteroundgeneral}. 
In order to decide whether a given $H$ fulfills this requirement one can first determine the local operators $X_j\otimes Y_j$ which fulfill Eq. (\ref{commute}) locally for a given $H_j$, i.e.,
\begin{equation}\label{commute02}
(X_j\otimes Y_j)^\dagger H_j(X_j\otimes Y_j)\propto H_j.
\end{equation}
Then one considers for these local operators $\bigotimes_{j\neq k} X_j\otimes Y_j$ and determines whether there exists a symmetry such that $\bigotimes_{j\neq k}S_i^{(j)}=\bigotimes_{j\neq k} (X_j\otimes Y_j)$. In case this is possible, one can consider all the symmetries $S_i$ for which this is true and check whether Eq. (\ref{notcommute}) holds.

In the following, we present a systematic way to evaluate for a given $H_j$ which operators fulfill Eq. (\ref{commute02}) locally. In order to do so, we use ideas which have also been used in Ref.~\cite{Spee2016} to determine the symmetries of four-qubit states. Eq.~\eqref{commute02} holds true if and only if 
\begin{equation}\label{commute2}
(\tilde{X}_j\otimes \tilde{Y}_j)^\dagger \tilde{H}_j(\tilde{X}_j\otimes \tilde{Y}_j)\propto \tilde{H}_j
\end{equation}
with $\tilde{H}_j=(M\otimes \bar{M})^{\dagger} H_j(M\otimes \bar{M})$ and $\tilde{X}_j\otimes \tilde{Y}_j=M^{-1}X_jM\otimes \bar{M}^{-1}Y_j\bar{M}$ for any $M, \bar{M}$ invertible. We then choose  $M\otimes \bar{M}$ such that $\tilde{H}_j$ is Bell diagonal denoted by $\tilde{H}_j^B$. Note that it has been shown~\cite{Belldiagonal} that this is always possible. We will call in the following $\tilde{H}_j^B$ the local standard form of $H_j$ and we obtain
\begin{equation}\label{commute3}
(\tilde{X}_j^B\otimes \tilde{Y}_j^B)^\dagger \tilde{H}_j^B(\tilde{X}_j^B\otimes \tilde{Y}_j^B)\propto \tilde{H}_j^B.
\end{equation}
Transforming then $\tilde{H}_j^B$ and $\tilde{X}_j^B\otimes \tilde{Y}_j^B$ into the magic basis, i.e., $ H_j^{mb}=U^\dagger\tilde{H}_j^BU$ and $O_j=U^\dagger \tilde{X}_j^B\otimes \tilde{Y}_j^B U$ where $U=\kb{\Phi^+}{00}-i \kb{\Phi^-}{01}+\kb{\Psi^-}{10}-i\kb{\Psi^+}{11}$, we obtain that $H_j^{mb}$ is a diagonal matrix with real non-zero entries (as $H_j$ is a positive definite matrix) and $O_j$ is a special orthogonal matrix. With this we have that 
\begin{equation}\label{commute4}
A_j=O_j^\dagger H_j^{mb} O_j\propto H_j^{mb}.
\end{equation}
This implies that 
\begin{equation}\label{commute5}
A_j^TA_j=(O_j)^T (H_j^{mb})^TH_j^{mb} O_j= (H_j^{mb})^TH_j^{mb}.
\end{equation}
Note that as this is a similarity transformation the set of eigenvalues has to stay invariant. As the eigenvalues of $ (H_j^{mb})^TH_j^{mb}$ have to be real and positive this implies that there is no proportionality factor. Alternatively, Eq. (\ref{commute5}) can be written as $[(H_j^{mb})^TH_j^{mb}, O_j]=0$ and it is straightforward to solve this equation. From the resulting matrices $O_j$ one obtains potential candidates for $\tilde{X}_j^B\otimes \tilde{Y}_j^B=UO_jU^\dagger$. We then insert them into Eq. (\ref{commute3}) and evaluate the ones that indeed leave $H_j^B$ invariant up to a factor. Note that if one knows the local operators $\tilde{X}_j^B\otimes \tilde{Y}_j^B$ which fulfill Eq. (\ref{commute3}) for the local standard forms one can directly calculate the ones which fulfill Eq. (\ref{commute02}) for a given $H_j$ via the relations
\begin{equation}
H_j= (M^{-1}\otimes \bar{M}^{-1})^\dagger \tilde{H}_j^B(M^{-1}\otimes \bar{M}^{-1})
\end{equation}
and
\begin{equation}
X\otimes Y=M\tilde{X}_j^BM^{-1}\otimes \bar{M}\tilde{Y}_j^B\bar{M}^{-1}.
\end{equation}
The explicit symmetries for all local standard forms can be found in Appendix~B. Using the method presented in \cite{Belldiagonal} one can identify the matrices $M$ and $\bar{M}$ which relate $H_j$ to its local  standard form and with this it is easy to determine the matrices $X$ and $Y$ which fulfill Eq.~(\ref{commute02}). Determining from the so obtained operators whether there exists a way of concatenating them to form a symmetry on $N-1$ party allows to straightforwardly employ the necessary conditions in Eqs.~(\ref{commute}) and (\ref{notcommute}) to identify states that cannot be reached via finite round LOCC protocols in cycle networks.
In the following, we will use Eqs.~(\ref{commute}) and (\ref{finiteLOCCk}) also to determine the class of states which can be reached with a LOCC protocol where each party measures solely once.

\section{A simple class of finite round LOCC protocols in cycle networks}
In case the necessary conditions in the previous section are fulfilled, a transformation may be possible.  In particular, these conditions provide a complete characterization of states which can be realized with protocols where each party measures once in a cycle network. In order for a state to be reachable from the network of maximally entangled states with a protocol where each party measures only once, the conditions in Eqs. (\ref{commute}) and (\ref{finiteLOCCk}) have to be fulfilled for some choice of $k$ and $G_k=A^\dagger A\otimes B^\dagger B$ for some $A, B\in SL(2)$. If they do not hold true, the state cannot be reached via such a protocol, as can be seen as follows. For a successful transformation, the initial state in the final round has to be of the form $\bigotimes_{i\neq k} h_i\otimes g_k\ket{\Psi}$ where $k$ is the party measuring in the final round. It is easy to see that for the considered class of protocols, $g_k$ has to be of the form $A\otimes B $, as party $k$ did not perform any measurement yet and the only way how one may obtain a non-trivial $g_k$ is via the symmetries. However, if the conditions are not fulfilled the transformation in the final round cannot be deterministic. In case they hold true, one can devise a protocol implementing the transformation. This results in the following observation.
\begin{observation}\label{obs:observation3}
A state $h\ket{\Psi}$ can be reached from the initial network via LOCC, where each party measures only once, and each step of the protocol is deterministic, if and only if the conditions in Eqs.~(\ref{commute}),(\ref{finiteLOCCk}) and (\ref{eq:non-trivial}) are fulfilled for  some choice of $k$ and $G_k=A^\dagger A\otimes B^\dagger B$ for some $A, B\in SL(2)$.
\end{observation}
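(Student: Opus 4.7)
The plan is to prove both directions of the equivalence.

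\textbf{Necessity ($\Rightarrow$).} Suppose $h\ket{\Psi}$ is reached by such a protocol, and let $k$ denote the party measuring in the last round. The pre-final state is a pure state SLOCC-equivalent to $\ket{\Psi}$; in an appropriate representation it can be written as $\bigotimes_{j\neq k} h_j \otimes g_k \ket{\Psi}$, and invoking the single-round deterministic-LOCC theorem of~\cite{finiteroundLU,finiteroundgeneral} on this last round already yields Eqs.~(\ref{commute}) and~(\ref{finiteLOCCk}) with $H_j = h_j^\dagger h_j$ and $G_k = g_k^\dagger g_k$. It remains to argue that $g_k = A \otimes B$ with $A,B \in SL(2)$. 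Two representations $\bigotimes h_j \otimes g_k \ket{\Psi}$ of the same pre-final state differ precisely by a symmetry of $\ket{\Psi}$, and since party $k$ has performed no measurement in the first $N-1$ rounds, the cumulative non-trivial operator on $k$ is a composition of restrictions of such symmetries to party $k$'s two qubits. By Observation~1 every symmetry factorizes as $\bigotimes_e (X_e \otimes X_e)$ with $\det X_e = 1$, so its restriction to the two edges incident to $k$ is a tensor product of two $SL(2)$ factors; compositions of tensor products stay tensor products, forcing $g_k = A \otimes B$ and hence $G_k = A^\dagger A \otimes B^\dagger B$.

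\textbf{Sufficiency ($\Leftarrow$).} Conversely assume the conditions hold for some $k$ and some $G_k = A^\dagger A \otimes B^\dagger B$. The single-round theorem then produces a deterministic one-round LOCC in which party $k$ measures, carrying the intermediate state $\bigotimes_{j\neq k} h_j \otimes (A\otimes B)\ket{\Psi}$ into the target $h\ket{\Psi}$. It therefore suffices to prepare this intermediate state from $\ket{\Psi}$ with each party other than $k$ measuring at most once. Using the edge symmetry $(X \otimes X)\ket{\Psi^-} = \ket{\Psi^-}$ for $X \in SL(2)$, I would push the factors $A$ and $B$ off party $k$ onto its two cyclic neighbors, rewriting the intermediate state equivalently as $\tilde h_1 \otimes \cdots \otimes I_k \otimes \cdots \otimes \tilde h_N \ket{\Psi}$, where $\tilde h_{k \pm 1}$ absorb $A^{-1}$ and $B^{-1}$ suitably and $\tilde h_j = h_j$ otherwise. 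With identity on party $k$ the non-trivial operators live on the chain obtained by cutting the cycle at $k$, which is a tree network, so by the Hayata reachability result~\cite{Hayata} this intermediate state can be prepared by a finite-round LOCC; one can moreover schedule such a preparation sequentially from a leaf inward so that each party on the chain performs exactly one deterministic measurement. Chaining this sub-protocol with the final measurement by $k$ completes the desired protocol.

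\textbf{Main obstacle.} The delicate step is the sufficiency direction, namely converting the existential reachability of~\cite{Hayata} into a strict one-measurement-per-party schedule, and certifying that each intermediate single-round transformation satisfies the deterministic-LOCC conditions Eqs.~(\ref{commute}) and~(\ref{finiteLOCCk}) at its own stage. Making this explicit, which is the content of Appendix~C, is the main technical work of the proof.
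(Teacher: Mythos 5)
Your necessity argument is sound and essentially the paper's: the last round must be a deterministic single-round transformation, which yields Eqs.~\eqref{commute} and \eqref{finiteLOCCk}, and since party $k$ has not yet measured, its accumulated operator is a composition of symmetry restrictions, which by Observation~1 factorize edge-wise into $SL(2)$ factors, giving $g_k=A\otimes B$. You are in fact slightly more explicit than the paper on this point.

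The gap is in the sufficiency direction. You reduce the preparation of the intermediate state $\bigotimes_{j\neq k}\tilde h_j\otimes\one_k\ket{\Psi}$ to the tree-graph reachability result of~\cite{Hayata} plus an assertion that the resulting protocol ``can be scheduled'' with one deterministic measurement per party. Two problems arise. First, \cite{Hayata} is a statement about tree networks, while the state here still lives on the cycle: party $k$ still holds one qubit from each of its two incident edges, so the graph has not changed. To make your reduction legitimate you would need the (unstated) observation that splitting party $k$ into two single-qubit parties turns the cycle into a chain and that every chain-LOCC protocol is a fortiori a cycle-LOCC protocol. Second, and more importantly, \cite{Hayata} only guarantees that \emph{some} finite-round LOCC protocol exists; it does not deliver the one-measurement-per-party, deterministic-at-every-step property that the observation demands, and your claim that the preparation can be so scheduled is precisely the content that must be proved. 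The paper does not invoke \cite{Hayata} here at all: Appendix~C constructs the protocol explicitly --- each party $j$ from $N$ down to $2$ applies the Pauli-twirled measurement $\{\tfrac{1}{2}h_j(\one\otimes\sigma_i)\}$ on the qubit facing the already-processed neighbour, who then applies the Pauli correction --- and the validity of each of these POVMs hinges on the nontrivial gauge choice $\tr_{e_{j-1}^{j}}(H_j)\propto\one$, obtained by replacing $H$ with $S^\dagger HS$ for a suitable symmetry $S$. Your sketch never addresses this normalization, without which the intermediate steps are not deterministic. So the skeleton points in the right direction, but the substance of the sufficiency proof --- the explicit measurements and the gauge fixing that makes them valid --- is missing.
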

If the conditions in Observation 3 hold, the following protocol allows us to reach the state $h\ket{\Psi}$ deterministically. First, note that we consider in the following the SLOCC operator $H$ for which
\begin{equation}
\tr_{e_{j-1}^{j}}(H_j)= 2 \one
\end{equation}
holds for $j\in\{3,\ldots,N-1\}$ and
\begin{equation}
\tr_{e_{1}^{2}}(B^{-\dagger}_{e_{1}^{2}}H_2B^{-1}_{e_{1}^{2}})= 2 \one.
\end{equation}
Here we use that $S^\dagger H S$ with $S\in \mathbf{S}(\Psi)$ defines the same state as $H$ (see also \cite{Vicente2013}). Hence, there is some ambiguity in the choice of H and we will make use of this to ensure the desired property. Note that for any state in a cycle network, it is possible to transform a given $H$ in such a form (if  we choose to normalize the operator accordingly).  We further denote here without loss of generality $k=1$ and enumerate the parties and edges consecutively, i.e., $e_j=(e_j^j,e_j^{j+1})\in E$. First, party $N$ applies the generalized measurement
\begin{equation}
\{1/4 h_N[(A^{-1}\sigma_i)_{e_N^{N}}\otimes (\sigma_k)_{e_{N-1}^{N}}]\},
\end{equation}
with $\tr (A^{-\dagger}H_NA^{-1})=4$ for normalization. Depending on the measurement outcome, party $1$ and party $N-1$ apply $\sigma_i$ and $\sigma_k$ respectively on the qubit that is connected to party $N$. Then, beginning with party $N-1$ and proceeding in decreasing order, party $j\in\{3,\ldots, N\}$ applies the generalized measurement
\begin{equation}
\{1/2 h_j(\one_{e_j^{j}}\otimes (\sigma_i)_{e_{j-1}^{j}})\}
\end{equation}
and for outcome $i$ party $j-1$ applies $\sigma_i$ on $e_{j-1}^{j-1}$ before it measures in the subsequent round. Finally, party $2$ applies the measurement
\begin{equation}
\{1/2 h_2(\one_{e_2^{2}}\otimes (B^{-1}\sigma_i)_{e_{1}^{2}})\}
\end{equation}
and for outcome $i$ party $1$ implements $\sigma_i$. With this, the initial state in the final round is given by  $A\otimes B \otimes_{i=2}^N h_i \ket{\Psi}$. In the last round party $1$ applies the generalized measurement $\{h_1 S_i^{(1)}\}$ which is a valid POVM due to Eq. (\ref{finiteLOCCk}). For outcome $i$, all other parties apply the unitary $V_i^{(j)}$ such that $V_i^{(j)} h_j=h_jS_i^{(j)}$. These unitaries exist due to Eq. (\ref{commute}). This protocol allows to reach the state $h\ket{\Psi}$ deterministically.

Unfortunately, however, this provides no complete characterization of finite round LOCC transformations in cycle networks, as the following example illustrates.

\section{Example of a transformation that requires  one party to measure more than once}

Analogous to the case of unitary symmetries, it may be that probabilistic steps are required \cite{finiteroundLU,finiteroundLU2}. Moreover, even if one restricts to protocols for which each LOCC round corresponds to a deterministic transformation among pure states (and therefore Eqs.~\eqref{commute} and~\eqref{finiteLOCCk} have to hold true for each round), it is not straightforward to find these protocols as parties may be required to measure more than once. In the simplest case of transformation between bipartite states with pure initial state, it is known that one round of communication is always sufficient~\cite{Lo2001,Nielsen}. For more parties, more rounds may be required. The minimum number of rounds is also called the \emph{round complexity}, and it it is known that exist tasks, including state discrimination and entanglement distillation, that show a separation between LOCC protocols with different round complexity~\cite{WinterLOCC}.

In case of transformations among generic four-qubit states~\cite{Vicente2013}, it can be easily seen that certain transformations require a round complexity of two. Moreover, the following transformation provides, up to our knowledge, the first example of a pure state LOCC transformation among fully entangled states that requires a round complexity of three. For this transformation, which requires one party to measure more than once, we consider the triangle network and the SLOCC operators
\begin{equation}
H_j= \one + \sum_{i=x,y} a_i^{(j)}\sigma_i\otimes \sigma_i
\end{equation}
for $j\in\{1, 2\}$ and
\begin{equation}
H_3= U_1^\dagger\otimes U_1^\dagger(\one + c \sigma_z\otimes \sigma_z)U_1\otimes U_1
\end{equation}
with $a_i^{(j)}\neq 0$, $a_x^{(j)}\neq \pm a_y^{(j)}$, $c\neq 0$ and $U_1=e^{i \alpha_x \sigma_x}$ with $\alpha_x\neq 0, m \pi/4$ and $m$ being an integer. In particular, we will show that for all symmetries for which Eq.~\eqref{commute} is fulfilled for two parties, Eq.~\eqref{finiteLOCCk} with $g_k=\one$ (i.e., party $k$ has not performed a measurement yet) does not hold true for the remaining party $k$. Thus, this state cannot be reached from the initial network state via a protocol where each party measures solely once. If party 3 is allowed to measure twice, the state can be reached, and we will provide an explicit protocol achieving this transformation.

We will make use of the necessary conditions in Eqs. (\ref{commute}) and (\ref{finiteLOCCk}) and  the systematic approach explained in Section~\ref{sec:systematic} that allows to identify the local operators which fulfill Eq. (\ref{commute}). Note that for party $1$ and $2$ $H_j$ is already Bell diagonal and corresponds to the non-degenerate case, which implies that for party $1$ and $2$ the symmetries that can potentially be used locally in the last round are $\sigma_i\otimes \sigma_i$, and for party $3$ they are
\begin{equation}
U_1^\dagger\otimes U_1^\dagger \{[Z(\phi_1)\otimes Z(\phi_2)](\sigma_x\otimes\sigma_x)^k\}U_1\otimes U_1
\end{equation}
with $k\in \{0,1\}$ and $Z(\phi)= diag(e^{i \phi},e^{-i \phi})$ as the local standard form has two double degenerate eigenvalues (see Appendix~\ref{Appendix_B}). 

The only operators $\otimes_{j\neq k} X_j\otimes Y_j$ on two parties which have a non-zero intersection with a symmetry restricted to two parties are (a) $ \sigma_i^{\otimes 4}$ on party $1$ and $2$ (with corresponding symmetries  $\sigma_i^{\otimes 6}$) or (b) $U_1^\dagger Z(\phi)  U_1\otimes\one^{\otimes 3} $ and  $\sigma_x^{\otimes 4}$ on party $3$ and one of the other parties (with corresponding symmetries $U_1^\dagger Z(\phi)  U_1\otimes  \one^{\otimes 4} \otimes U_1^\dagger Z(\phi) U_1$ and $\sigma_x^{\otimes 6}$). 
 
Let us first consider case (a) in which the party measuring in the last round is party $3$. It is straightforward to see that the final state is not reachable in a protocol where this party has not performed any measurement before, and therefore $g_3=A\otimes B$. This is due to the fact that
\begin{equation}
\sum_i p_i \sigma_i\otimes\sigma_i H_3\sigma_i\otimes \sigma_i \not \propto G_3=A^{\dagger}A\otimes B^{\dagger}B
\end{equation}
for any probability distribution $p_i$ (see Observation~\ref{obs:observation3}). Analogously, we have that for case (b)
\begin{equation}
\sum_i p_i U_1^\dagger Z(-\phi_i)  U_1\otimes\one H_j U_1^\dagger Z(\phi_i)  U_1\otimes\one\not \propto A^{\dagger}A\otimes B^{\dagger}B
\end{equation}
for $j\in\{1, 2\}$ and therefore also in this case, it is not possible to reach the final state if each party measures only once. Note here that the symmetry $\sigma_x^{\otimes 6}$ does not need to be taken into account as  $\sigma_x^{\otimes 6}$ commutes with $H$ and hence is equivalent to the identity in our analysis.

However, it can be reached via the following protocol. First party $3$ measures
\begin{equation}
\{\frac{1}{\sqrt{2}}\tilde{h}_3, \frac{1}{\sqrt{2}}\tilde{h}_3(\one\otimes \sigma_y)\},
\end{equation}
with $\tilde{h}_3=\sqrt{\one + \tilde{c} \sigma_z\otimes\sigma_z}$ and $\tilde{c}=1/4 \tr (H_3\sigma_z\otimes \sigma_z)$ which corresponds to a valid POVM. In case of the second outcome party 2 applies $\sigma_y$ on the qubit which is entangled with the second qubit of party 3. With this they reach deterministically the state $(\one\otimes\one\otimes\tilde{h}_3)(\bigotimes_{e\in E} \ket{\Phi^+}_e)$. Then party 1 and 2 measure
\begin{equation}
\{\frac{1}{\sqrt{2}}h_j, \frac{1}{\sqrt{2}}h_j(\one\otimes\sigma_z)\},
\end{equation}
where for the second outcome, $\sigma_z$ is acting on the qubit that is connected to the third party and in case this outcome is obtained, party 3 applies $\sigma_z$ on the qubit which originated from the same source. By doing so, they obtain deterministically the state $(h_1\otimes h_2 \otimes \tilde{h}_3)(\bigotimes_{e\in E} \ket{\Phi^+}_e)$. Finally, party 3 implements the measurement
\begin{equation}
\{\frac{1}{\sqrt{2}} h_3(\sigma_i\otimes\sigma_i)\tilde{h}_3^{-1}\},
\end{equation}
with $i=0, z$. This is a valid POVM and party $j=1,2$ apply for outcome $i$ the unitary $V_{ij}$ with the property that $V_{ij}h_j=h_j(\sigma_i\otimes\sigma_i)$ (which have to exist as $[H_j,(\sigma_i\otimes\sigma_i)]=0$). This transformation deterministically reaches the desired state.

\section{Conclusion and Outlook}
In this work, we have studied finite-round LOCC transformation in networks. We first provided the symmetries for arbitrary network structures, if the sources distribute two-qubit states. Which finite-round LOCC transformations can be implemented is determined by the symmetries \cite{Gour,Gour2,measuroutcdonotexist,finiteroundgeneral}, and we use this to show that (in contrast to tree networks \cite{Hayata}) for networks containing a cycle there exist states which are not reachable. For cycle networks, we provide a systematic way to determine classes of states that cannot be reached via finite-round LOCC transformations. Moreover, we characterize all states which can be reached with protocols where each party measures once, and each step of the protocol is deterministic.
Finally, we provide an example of a state which is not reachable with simple protocols where each party measures solely once, showing that in general more involved protocols have to be considered.

In this work we have only considered qubit sources. One may generalize our approach to derive the symmetries to higher-dimensional systems, and use this to characterize finite-round LOCC transformations in networks distributing qudit states. Furthermore, the scenario we consider, namely a network distributing single copies of bipartite states accompanied with finite-round LOCC transformations, is highly relevant for quantum communication tasks and of practical relevance. Hence, it would be desirable to further study the resource theory of network entanglement and identify operationally meaningful entanglement measures.\\

\acknowledgements

We thank Hayata Yamasaki, Martin Hebenstreit, Barbara Kraus and Otfried G\"uhne for discussions and comments. This work has been supported by the the Austrian Science Fund (FWF): J 4258-N27, Y879-N27 and P 32273-N27, the Austrian Academy of Sciences, the ERC (Consolidator Grant 683107/TempoQ), and the Deutsche Forschungsgemeinschaft (DFG, German Research Foundation - 447948357 and 440958198).

\appendix
\onecolumngrid
\renewcommand\theequation{A\arabic{equation}}
\setcounter{equation}{0}

\section{Proof of Lemma \ref{Lemma_factorize}} \label{Appendix_A}
In the following, we provide a proof for the fact that the local symmetry operators $S_2$ and $S_3$ in Eq. (\ref{eq_symmetry}) in the main text have to factorize.
\begin{lemma}\label{Lemma_factorize}
The relation $X_{AB}Z_{BC}=Y_{AC}\otimes \one_B$ with system $B$ being two-dimensional, and $X, Z$ invertible can only hold true if $X_{AB}=\bar{X}_A\otimes X_B$ and $Z_{BC}=\bar{Z}_B\otimes Z_C$ for some choice of invertible operators $\bar{X}_A, X_B, \bar{Z}_B$ and $Z_C$, i.e., the operators factorize.
\end{lemma}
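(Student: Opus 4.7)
The plan is to expand both operators in bases of $A$ and $C$, reducing the hypothesis to a condition on products of $2\times 2$ matrices on $B$, and then exploit the two-dimensionality of $B$ together with the invertibility of $X_{AB}$ and $Z_{BC}$ to force factorization. First I would fix bases $\{\ket{a}\}$ of $A$ and $\{\ket{c}\}$ of $C$ and write $X_{AB}=\sum_{a,a'}\ket{a}\bra{a'}_A\otimes M^{(a,a')}$ and $Z_{BC}=\sum_{c,c'}N^{(c,c')}\otimes \ket{c}\bra{c'}_C$, where each $M^{(a,a')}$ and $N^{(c,c')}$ is a $2\times 2$ operator on $B$. Matching coefficients of $\ket{a}\bra{a'}_A\otimes \ket{c}\bra{c'}_C$ in $X_{AB}Z_{BC}=Y_{AC}\otimes \one_B$ collapses the hypothesis into the system
\[
  M^{(a,a')}\,N^{(c,c')} \;=\; y^{(a,a'),(c,c')}\,\one_B \qquad \forall\,a,a',c,c',
\]
where the $y^{(a,a'),(c,c')}$ are the matrix elements of $Y_{AC}$.

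The main step, and the place where I expect the real obstacle, is to show that at least one of the $M^{(a_0,a_0')}$ is invertible. Suppose, toward contradiction, that every $M^{(a,a')}$ is singular. Since $X_{AB}\neq 0$, at least one $M^{(a_1,a_1')}$ is nonzero; being a singular $2\times 2$ matrix it has rank one and may be written as $M^{(a_1,a_1')}=\ket{u}\bra{v}$ with $\ket{u},\ket{v}\neq 0$. For any $(c,c')$, the product $\ket{u}\bra{v}\,N^{(c,c')}$ has rank at most one, so the equation $M^{(a_1,a_1')}N^{(c,c')}=y^{(a_1,a_1'),(c,c')}\one_B$ forces $y^{(a_1,a_1'),(c,c')}=0$ and hence $\bra{v}N^{(c,c')}=0$ for every $(c,c')$. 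But then $(\bra{v}_B\otimes \one_C)\,Z_{BC}=\sum_{c,c'}(\bra{v}N^{(c,c')})\otimes \ket{c}\bra{c'}_C=0$, so the image of $Z_{BC}$ lies inside the proper subspace $\ker(\bra{v}_B\otimes \one_C)\subset B\otimes C$, contradicting the invertibility of $Z_{BC}$. The symmetric argument shows that some $N^{(c_0,c_0')}$ is invertible as well.

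Once an invertible $M_*:=M^{(a_0,a_0')}$ is secured, the relation $M_* N^{(c,c')}=y^{(a_0,a_0'),(c,c')}\one_B$ yields $N^{(c,c')}=y^{(a_0,a_0'),(c,c')}\,M_*^{-1}$ for every $(c,c')$. Reassembling gives $Z_{BC}=M_*^{-1}\otimes Z_C$ with $Z_C:=\sum_{c,c'}y^{(a_0,a_0'),(c,c')}\,\ket{c}\bra{c'}_C$, which factorizes $Z_{BC}$ on the $B\,|\,C$ split; $Z_C$ must be invertible, for otherwise $Z_{BC}$ would not be. Substituting the factorized form back into the hypothesis yields $W_{AB}\otimes Z_C = Y_{AC}\otimes \one_B$ with $W_{AB}:=X_{AB}(\one_A\otimes M_*^{-1})$. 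Picking any $\ket{\phi},\ket{\phi'}\in C$ with $\bra{\phi'}Z_C\ket{\phi}\neq 0$ and contracting both sides on $C$ reduces this equation to $W_{AB}\propto \bar X_A\otimes \one_B$ for some operator $\bar X_A$ on $A$, and hence $X_{AB}=W_{AB}(\one_A\otimes M_*)\propto \bar X_A\otimes M_*$. The invertibility of both tensor factors on each side then follows from the invertibility of $X_{AB}$ and $Z_{BC}$, completing the factorization.
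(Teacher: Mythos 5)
Your proof is correct, and it takes a genuinely different route from the one in the paper. The paper expands $X$ and $Z$ in a Pauli operator basis on $B$, conjugates by a matrix $M_B$ to bring one $B$-component of $Z$ into Jordan form, and then runs a case analysis (diagonal versus Jordan block of size $2$), deriving orthogonality conditions on coefficient vectors and invoking the invertibility of $Z$ to exclude the degenerate branches. You instead expand in matrix units on $A$ and $C$, which collapses the hypothesis into the family of $2\times 2$ equations $M^{(a,a')}N^{(c,c')}=y^{(a,a'),(c,c')}\one_B$; the key step is your rank argument showing that if every $M^{(a,a')}$ were singular then the range of $Z_{BC}$ would sit inside a proper subspace, contradicting invertibility. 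Once one $M_*$ is invertible, all $N^{(c,c')}$ are forced to be scalar multiples of $M_*^{-1}$ and the factorization of both operators drops out in two lines. Your argument is more elementary (no Jordan-form case split, no coefficient bookkeeping) and, with the obvious replacement of ``rank one'' by ``rank at most $\dim B-1$'', it extends verbatim to a system $B$ of arbitrary dimension, whereas the paper's proof is tied to the two-dimensional structure. The paper's coefficient-based computation, on the other hand, produces the explicit linear constraints on the symmetry operators that are reused in its later analysis. Both proofs are valid; yours is the cleaner derivation of the lemma as stated.
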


\begin{proof}
We will show that $X_{AB}Z_{BC}=Y_{AC}\otimes \one_B$, with system $B$ being two-dimensional, and $X, Z$ invertible implies that $X_{AB}=\tilde{X}_A\otimes X_B$ and $Z_{BC}=\tilde{Z}_B\otimes Z_C$. As we have explained in the main text (see also Fig.~\ref{fig:symm} in the main text), this implies that in any network the local symmetries have to factorize with respect to the different particles the parties receive from the sources.

So let us consider the equation 
\begin{equation}\label{eq_fact}
X_{AB}Z_{BC}=Y_{AC}\otimes \one_B.
\end{equation}
Note that this equation holds true if and only if 
\begin{equation}\label{eq_fact2}
\tilde{X}_{AB}\tilde{Z}_{BC}=Y_{AC}\otimes \one_B,
\end{equation}
with $\tilde{Q}=M_B^{-1}Q M_B$ for $Q=X, Z$. Note further that $Q$ factorizes with respect to the splitting B versus rest if and only if $\tilde{Q}$ does.
We will decompose the operators $\tilde{X}$ and $\tilde{Z}$ in an operator basis $\{\lambda_i\otimes \sigma_j\}$, where $\sigma_j$ is acting on system B and correspond to the Pauli operators, i.e.,
\begin{equation}
\tilde{X}_{AB}=\sum _{i,j} x_{ij} \lambda_i\otimes \sigma_j
\end{equation}
and
\begin{equation}
\tilde{Z}_{BC}=\sum _{k,l} z_{kl} \sigma_k\otimes \lambda_l=\sum_l W_l\otimes \lambda_l.
\end{equation}
Moreover, we will choose M such that $W_0=\sum _{k} z_{k0} \sigma_k$ is in Jordan decomposition. Here we chose without loss of  generality to bring the matrix $W_0$ acting on B for the component $\lambda_0$ into Jordan form, however, in case this component does vanish we relabel the basis for C such that some non-zero component is referred to as $\lambda_0$.
The matrix $W_0$ can then either be diagonal or have a Jordan block of size 2. We will first discuss the diagonal case.
Then in order for Eq. (\ref{eq_fact2}) to hold true we have that for all $i$ and $j\in{1,2,3}$
\begin{align}\nonumber
&\tr [(\tilde{X}_{AB}\tilde{Z}_{BC})(\lambda_i\otimes \sigma_j\otimes \lambda_0)]=\tr [(\tilde{X}_{AB}\{\one_A\otimes W_0\})(\lambda_i\otimes \sigma_j)]=0.
\end{align}
This is equivalent to
\begin{align}\nonumber
&x_{i1}z_{00}+i x_{i2} z_{30}=0,\\\nonumber
&x_{i2}z_{00}-i x_{i1} z_{30}=0,\\
&x_{i3}z_{00}+x_{i0}z_{30}=0.
\label{eq_cond}
\end{align}
These equations can also be written as
\begin{align}\label{vectors}
\vec{v}^{j}_i\cdot \vec{\omega}=0,
\end{align}
with $\vec{\omega}= (z_{00},z_{03})^T$, $\vec{v}^{1}_i=(x_{i1},i x_{i2})^T$,$\vec{v}^{2}_i=(x_{i2},-i x_{i1})^T$ and $\vec{v}^{3}_i=(x_{i3},ix_{i0})^T$. From Eq. (\ref{vectors}) for $j=1$ and $j=2$ we have that  $\vec{v}^{2}_i= c_i \vec{v}^{2}_i$ with $c_i$ being some proportionality factor, which can be straightforwardly be shown to be $\pm i $. With this we have that $x_{i2}=\pm i x_{i1}$. In case $x_{i1}\neq 0$ it therefore follows that $z_{00}=\pm z_{03}$. This implies that $x_{i2}=\pm i x_{i1}$ for all $i$, i.e. it is the same sign for all $i$, and also $x_{i3}=\mp x_{i0}$ for all $i$. 
It should be noted that these constraints lead to a matrix $W_i$ which has a kernel. The corresponding eigenvector is $(1,0)^T$ for $x_{i2}= i x_{i1}$ or $(0,1)^T$ for  $x_{i2}=-i x_{i1}$. As any $W_i$ has the same kernel this solution is only possible if $\tilde{Z}$ is not invertible (which contradicts our assumption). Hence, it has to hold that $x_{i2}=x_{i1}=0$. From Eq. (\ref{vectors}) for $j=3$ it follows that all vectors $\vec{v}^{3}_i$ have to be parallel (or the zero vector), which corresponds to the case that $\tilde{X}$ factorizes. It can be straightforwardly seen that in this case also $\tilde{Z}$ has to factorize.

It remains to consider the case that $W_0$ is a Jordan block of size 2, i.e. $z_{30}=0$ and $z_{20}=i z_{10}=i/2$. Then Eq. (\ref{eq_fact}) is fulfilled  only if for all $i$
\begin{align}\nonumber
&x_{i1}z_{00}+i
(x_{i0}+x_{i3}) z_{10}=0,\\\nonumber
&x_{i2}z_{00}+i (x_{i0}+ x_{i3})z_{10}=0,\\
&x_{i3}z_{00}+ i(i x_{i1}-x_{i2})z_{10}=0.
\label{eq_cond2}
\end{align}
Let us first consider the case that $z_{00}=0$. Then for all $i$ it has to hold that $x_{i2}=i x_{i1}$ and  $x_{i0}=- x_{i3}$. As before this would imply in contradiction to our assumption that $\tilde{Z}$ is not invertible.
Considering the case $z_{00}\neq 0$ it can be easily seen that  $x_{i2}=i x_{i1}$ and $x_{i3}=0$. We can then write the first condition in Eq.~(\ref{eq_cond2}) as
\begin{align}\nonumber
0=x_{i1}z_{00}+i x_{i0}z_{10}\equiv \vec{r}_i\cdot\vec{s},
\end{align}
with $ \vec{r}_i=(x_{i1},x_{i0})^T$ and $\vec{s}=(z_{00},i z_{10})^T$. Hence, for all $i$ the vectors $ \vec{r}_i$ have to be parallel (or the zero vector). This implies that $\tilde{X}$ and hence also $\tilde{Z}$ has to factorize, which completes the proof that in order for Eq. (\ref{eq_fact}) to hold $X$ and $Z$ have to factorize.
 \end{proof}

\section{Symmetries for the local standard form}
\label{Appendix_B}
The local standard form is chosen such that $H_j^{mb} $ is diagonal with real positive entries. Which symmetries are possible will depend on the degeneracy. Note that one can always permute the diagonal entries of  $H_j^{mb}$ to bring it in a form mentioned below, e.g., one can permute $\proj{\Psi^+}$ and $\proj{\Psi^-}$ (leaving at the same time $\proj{\Phi^\pm}$ invariant) by conjugation with $\sqrt{i\sigma_z}\otimes \sqrt{-i\sigma_z}$. Via the procedure outlined in the main text one can determine the symmetries of $H_j^B$. We obtain the following results:

\begin{enumerate}[label=(\roman*)]
\item No degeneracy, i.e., $H_j^{mb} =diag(a,b,c,d)$: The local operators which leave $H_j^B$ invariant are $\sigma_i\otimes\sigma_i$ (see also~\cite{Spee2016}).
\item A double-degenerate and two different eigenvalues, i.e., $H_j^{mb} =diag(a,a,b,c)$: The operators  $[Z(\phi)\otimes Z(\phi)](\sigma_x\otimes\sigma_x)^k$ with $Z(\phi)= diag(e^{i \phi},e^{-i \phi})$ and $k\in \{0,1\}$ are the only local ones which leave $H_j^B$ invariant.
\item Two double-degenerate eigenvalues, i.e., $H_j^{mb} =diag(a,a,b,b)$: The symmetries of $H_j^B$ are $[Z(\phi_1)\otimes Z(\phi_2)](\sigma_x\otimes\sigma_x)^k$ with $k\in \{0,1\}$.
\item Triple degenerate eigenvalues, i.e., $H_j^{mb} =diag(a,a,a,b)$: The local  symmetries of $H_j^B$  are given by $Z(\phi_1)X(\alpha)Z(\phi_2)\otimes Z(\phi_1)X(-\alpha)Z(\phi_2)$ with $X(\alpha)=e^{i\alpha\sigma_x}$.
\item Four degenerate eigenvalues, i.e., $H_j^{mb} =\one$: All local unitaries leave $H_j^B$ invariant.
\end{enumerate}

\end{document}